\newcommand{\cC}{\mathcal{C}}
\newcommand{\cH}{\mathcal{H}}
\newcommand{\cP}{\mathcal{P}}
\newcommand{\cQ}{\mathcal{Q}}
\newcommand{\JM}{\mathcal{M}}
\newcommand{\Pe}{\mathcal{P}^{ex}}
\newcommand{\out}{O}
\newcommand{\joint}{\mathcal{M}}
\newcommand{\basic}{X}
\newcommand{\mpout}{\text{Out}}
\newcommand{\measprot}{\text{MP}}
\newcommand{\la}{\langle}
\newcommand{\ra}{\rangle}
\definecolor{mygreen}{RGB}{5, 96, 5}
\newtheorem{theo}{Theorem}
\newtheorem{thm}[theo]{Theorem}
\newtheorem{lemma}[theo]{Lemma}
\newtheorem{defn}[theo]{Definition}
\theoremstyle{definition} 
\begin{document}
\title {Macroscopic non-contextuality as a principle for Almost Quantum Correlations}

\author{Joe Henson,  Ana Bel\'en Sainz\\[0.5em] 
{\it\small H.H. Wills Physics Laboratory, University of Bristol, Tyndall Avenue, Bristol, BS8 1TL, U.K.}}
\date{13 April 2015}

\maketitle
\begin{abstract}
Quantum mechanics allows only certain sets of experimental results (or ``probabilistic models'') for Bell-type quantum non-locality experiments. A derivation of this set from simple physical or information theoretic principles would represent an important step forward in our understanding of quantum mechanics, and this problem has been intensely investigated in recent years. ``Macroscopic locality,'' , which requires the recovery of locality in the limit of large numbers of trials, is one of several principles discussed in the literature that place a bound on the set of quantum probabilistic models. 

A similar question can also be asked about probabilistic models for the more general class of quantum contextuality experiments. Here, we extend the Macroscopic Locality principle to this more general setting, using the hypergraph approach of Ac\'in, Fritz, Leverrier and Sainz [Comm. Math. Phys. 334(2), 533-628 (2015)], which provides a framework to study both phenomena of nonlocality and contextuality in a unified manner. We find that the set of probabilistic models allowed by our Macroscopic Non-Contextuality principle is equivalent to an important and previously studied set in this formalism, which is slightly larger than the quantum set. In the particular case of Bell Scenarios, this set is equivalent to the set of ``Almost Quantum'' models, which is of particular interest since the latter was recently shown to satisfy all but one of the principles that have been proposed to bound quantum probabilistic models, without being implied by any of them (or even their conjunction). Our condition is the first characterisation of the almost quantum set from a simple physical principle.
\end{abstract}
\vskip 1cm

\section*{Introduction}

Nonlocality \cite{Bell} and contextuality \cite{KS} \cite{RevBell, exp1, exp2, contexp0, contexp1, contexp2} are arguably the two phenomena that most starkly reveal the difference between quantum and classical mechanics  \cite{RevBell, LSW, conttheo1}. With regard to the first of these, Bell showed that quantum mechanics makes predictions for the strength of correlations between spacelike separated measurements that are incompatible with Bell's local causality condition, a seemingly natural formalisation of the idea that there is no superluminal causal influence \cite{Bell}. For the second, the Kochen-Specker theorem \cite{KS} states that quantum mechanics is incompatible with the assumption that measurement outcomes are determined by physical properties that do not depend on the measurement context. Quantum theory successfully explains both phenomena, but it predicts that only some sets of experimental probabilities for nonlocality and contextuality experiments can be attained, and it remains an open challenge to characterise this set of physically attainable results with simple, natural physical or information-theoretic principles.  The search for this deeper understanding of quantum nonlocality and contextuality is motivated by the possibility of reformulating and/or generalising quantum theory (\textit{e.g.}~for the purposes of formulating a theory of quantum gravity) as well as finding new ways to prove results in quantum information theory directly from simple principles, without having to invoke the whole structure of quantum theory.

The principle that there are no superluminal signals is not enough to characterise quantum correlations for nonlocality experiments in this sense \cite{tsirelson, PR}. Hence, stronger principles are needed. Some proposals are non-trivial communication complexity \cite{vd}, information causality (IC) \cite{IC}, local orthogonality \cite{FSA} and macroscopic locality \cite{ML}. In this work we focus on the macroscopic locality (ML) principle.  Essentially ML states that, for a certain macroscopic extension of a Bell experiment, Bell's local causality will hold, or in other words quantum nonlocality will no longer be detectable in this macroscopic limit.  

The problem of characterising quantum correlations in Bell scenarios from basic principles is however far from being solved. All the principles proposed so far, except IC, have been shown to be satisfied by some supra-quantum correlations (and the same is suspected to be true of IC) \cite{AQ}. Indeed, there exists a set of correlations called ``almost quantum'' that is slightly larger than the quantum set and yet satisfies these principles, presenting a curious barrier to a full characterisation of the quantum set \cite{AQ}. However, even a characterisation of the almost quantum set from basic principles is still missing.

Moving on to contextuality scenarios, the problem of characterising quantum models from basic principles has not been so intensely studied. The ``almost quantum'' set of non-local correlations generalises in this case to a set called $\mathcal{Q}_1$, which is strictly larger than the quantum set.  The most relevant proposals to describe the latter are the \textit{Exclusivity} principle \cite{cab2} and \textit{Consistent Exclusivity} (CE), which (when defined as in definition 7.1.1 of \cite{AFLS}) impose the same constraints (compare CE to the definition of the E principle in e.g. \cite{cab2}). This E/CE principle has been applied in many different ways to contextuality scenarios \cite{AFLS, cab2, cab1}, but they were never strong enough to single out quantum models in the sense of \cite{AFLS}. Moreover, several of the most powerful results rely on auxillary assumptions, some more simple and physically compelling than others.  For instance, when assuming both CE and that all quantum models are \textit{inside} the physically allowed set of models, it can be shown that all models violating $\mathcal{Q}_1$ are \textit{outside} the physically allowed set of models \cite{cab1, AFLS}.

In this work we propose a generalisation of ML to arbitrary contextuality scenarios, which we call \textit{macroscopic non-contextuality} (MNC). We use the hypergraph approach to nonlocality and contextuality developed in \cite{AFLS} to represent such scenarios, which we briefly review below. We find that MNC characterises the particular set $\mathcal{Q}_1$ of probabilistic models, which includes supraquantum models. For Bell scenarios, this strengthens the original ML principle, because the set $\mathcal{Q}_1$ is equivalent to the set of almost quantum correlations \cite{AQ} in that case.  Thus, we provide the first characterisation of almost quantum correlations from basic physical principles. 

In section \ref{se:contsce} below, the hypergraph approach to contextuality is reviewed and the relevant sets of probabilistic models (quantum, $\mathcal{Q}_1$ and non-contextual) are defined.  In section \ref{se:MNC} macroscopic noncontextuality is defined in analogy to macroscopic locality, and shown to be equivalent to $\mathcal{Q}_1$.  A discussion of the physical motivation of the principle and some other details follow in section \ref{se:discussion}.

\section{Contextuality scenarios}\label{se:contsce}

In this paper we represent general contextuality scenarios, including Bell scenarios, as in the hypergraph approach to contextuality of \cite{AFLS}. This section provides a brief review of the notation as well as the sets of correlations which are relevant for our result. For more details on the formalism, the reader can consult \cite{AFLS}.

A contextuality scenario \cite{AFLS} is defined as a hypergraph $H = (V, E)$ whose vertices $v \in V$ correspond to the events in the scenario. Each \textit{event} represents an outcome obtained from a device after it receives some input or ``measurement choice''. The hyperedges $e \in E$ are sets of events representing all the possible outcomes given a particular measurement choice. The hypergraph approach assumes that every such measurement set is complete, in the sense that if the measurement corresponding to $e$ is performed, exactly one of the outcomes corresponding to $v \in e$ is obtained. Note that measurement sets may have non-trivial intersection; when an event appears in more than one hyperedge, this represents the idea that the two different operational outcomes should be thought of as equivalent, in a sense that will be specified further below.

A \textit{probabilistic model} on a contextuality scenario is an assignment of a number to each of the events, $p\,:\, V \, \to \, [0,1]$, which denotes the probability with which that event occurs when a measurement $e \ni v$ is performed. By defining probabilistic models in this way (rather than by a function $p_e(v)$ depending on the measurement $e$ performed), we are assuming that in the set of experimental protocols that we are interested in, the probability for a given outcome is independent of the measurement that is performed.\footnotemark~Because the measurements are complete, every probabilistic model $p$ over the contextuality scenario $H$ satisfies the normalisation condition $\sum_{v \in e} p(v) = 1$ for every $e \in E$.

\footnotetext{In standard discussions of quantum contextuality, ``the set of experimental protocols that we are interested in'' means carrying out a fixed set of measurements on a quantum system.  In this case, two outcomes always have the same probabilities if they correspond to the same measurement operator acting on the same Hilbert space. When discussing contextuality more generally, it is often (explicitly or implicitly) assumed that some naturally defined set of experiments will still be available, with respect to which outcomes can be identified in a similar way; in some cases this can be justified by appeal to general principles, especially lack of signalling between parties.}

Bell scenarios (see Ap. \ref{se:corr-meas}) are naturally incorporated in the hypergraph approach as a type of product of several contextuality scenarios, one for each local party. Specifically, in an $(n,m,d)$ Bell scenario the ``global'' events $v \in V$ can be asscoiated with a list of ``local'' outcomes for each party: $v = (a_1 \ldots a_n | x_1 \ldots x_n)$. The hyperedge set $E$ however does not have such a simple representation: it includes \textit{simultaneous measurements} as well as \textit{correlated measurements}, also denoted as \textit{branching measurements} \cite{joe-hist} or \textit{one-way LOCC measurements} \cite{LOCCandreas} (see Ap. \ref{se:corr-meas} for a fuller explanation). 

In what follows we revisit the definitions of \textit{classical}, \textit{quantum} and $\mathcal{Q}_1$ probabilistic models. For other interesting sets of models we refer the reader to \cite{AFLS}. 

\begin{defn}\label{qmdef} \textbf{Quantum models} [\cite{AFLS}, 5.1.1]\\
Let $H$ be a contextuality scenario. An assignment of probabilities $p: V(H)\to [0,1]$ is a \emph{quantum model} if there exist a Hilbert space $\cH$, a quantum state $\rho\in\mathcal{B}_{+,1}(\mathcal{H})$ and a projection operator $P_v\in\mathcal{B}(\mathcal{H})$ associated to every $v\in V$ which constitute projective measurements in the sense that
\begin{equation}
\label{qmeas}
\sum_{v\in e} P_v = \mathbbm{1}_{\mathcal{H}} \quad\forall e\in E(H) ,
\end{equation}
and reproduce the given probabilities,
\begin{equation}
\label{qrep}
p(v) = \mathrm{tr}\left( \rho P_v \right) \quad\forall v\in V(H) .
\end{equation}
The set of all quantum models is the \emph{quantum set} $\mathcal{Q}(H)$.
\end{defn}

Later some comments will be made on the meaning and consequences of generalising this definition to POVMs. 

Ac\'in, Fritz, Leverrier and Sainz \cite{AFLS} prove that, in the case of Bell scenarios, Def.~\ref{qmdef} accords with the usual definition of quantum correlations (see Def.~\ref{qcorrvan} in Ap. \ref{se:corr-meas}), meaning that each global measurement represented by the projectors $\{P_v\}_{v \in e}$ can consistently be expressed as a product of local projectors, one for each party, such that the projectors for different parties commute (and sum up to the identity). For instance, in the bipartite case $P_{ab | xy} = P_{a|x} P_{b|y}$, where $[P_{a|x}, P_{b|y}]=0$ for all $a,b,x,y$ and $\sum_{a} P_{a|x} = \mathbbm{1}_{\mathcal{H}}$ (similarly $\sum_{b} P_{b|y} = \mathbbm{1}_{\mathcal{H}}$). 
 
The following set of correlations will be important in the following argument.

\begin{defn}\label{q1mdef} $\cQ_1$ \textbf{models} [\cite{AFLS}, 6.1.2]\\
Let $H$ be a contextuality scenario. An assignment of probabilities $p: V(H)\to [0,1]$ is a $\cQ_1$ \emph{model} if there exists a ``$\cQ_1$ certificate'':  a p.s.d.~matrix ranging over all $v \in V(H)$, with a special column labelled $1$, such that for all $e \in E(H)$,
\begin{enumerate}
\item\label{q1a} $\sum_{u\in e} M_{uv} = M_{1v}$ and $\sum_v M_{1v} = M_{11}$ ;
\item\label{q1b}  $(u,v \in e$ and $ u \neq v) \, \Rightarrow M_{uv} = 0$;
\item\label{q1c}  $M_{vv} = p(v)$;
\end{enumerate}
The set of all these models is denoted $\mathcal{Q}_1(H)$.
\end{defn}

This set $\mathcal{Q}_1$ arises in \cite{AFLS} as the first level of a hierarchy of relaxations that converges to the quantum set. In addition, it is shown in Corollary 6.4.2 of \cite{AFLS} that when the contextuality scenario is a Bell scenario, the set $\mathcal{Q}_1$ coincides with the Almost Quantum set of correlations \cite{AQ}. 

An equivalent characterisation of $\mathcal{Q}_1$ models is useful in the main proof of section \ref{se:MNC}: 
\begin{lemma}\label{q1mdef2}
Given a scenario $H$, a matrix $M$ is a $\cQ_1$ certificate for a given behaviour $P$ iff it is a p.s.d.~matrix ranging over all $v \in V(H)$, and a special column labelled $1$, such that
\begin{enumerate}
\item\label{q12a} $\sum_{u\in e} M_{uv} = P(v)$ for all $u\in V(H)$ ;
\item\label{q12b} $(u,v \in e$ and $ u \neq v) \, \Rightarrow M_{uv} = 0$;
\item\label{q12c} $M_{vv} = P(v)$;
\item\label{q12d} $M_{1v} = P(v)$ and $M_{11} = 1$;
\end{enumerate}
\end{lemma}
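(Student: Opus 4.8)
The plan is to show that the two definitions are equivalent by a direct pair of implications, with most of the work being simple algebraic rewriting using the normalisation condition $\sum_{v\in e} P(v) = 1$ and the constraints already imposed. First I would assume $M$ is a $\mathcal{Q}_1$ certificate in the sense of Definition~\ref{q1mdef} and derive the conditions of Lemma~\ref{q1mdef2}. The key observation is that condition~\ref{q1a} already contains the two statements $\sum_{u\in e} M_{uv} = M_{1v}$ and $\sum_v M_{1v} = M_{11}$; combining the latter with normalisation (choosing any fixed hyperedge $e_0$, using condition~\ref{q1a} to write $\sum_{u \in e_0} M_{uv} = M_{1v}$ and summing over $v$) should pin down $M_{11} = 1$ and hence $M_{1v} = P(v)$ for all $v$ once we also invoke condition~\ref{q1c}. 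More precisely: from $M_{vv}=p(v)$ and positive semidefiniteness one gets that the diagonal entries sum, over a hyperedge, to $1$; matching this against $\sum_{u \in e} M_{uv}=M_{1v}$ evaluated appropriately yields $M_{1v}=p(v)=P(v)$ and $M_{11}=\sum_v M_{1v}=\sum_v P(v)$, which equals $1$ by summing the normalisation over the vertices of a single hyperedge is not quite it — rather one uses that $M_{1v}=\sum_{u\in e}M_{uv}$ for $v\in e$ gives $\sum_{v\in e}M_{1v}=\sum_{v\in e}\sum_{u\in e}M_{uv}$, and the left side, being $\sum_{v\in e}P(v)$, equals $1$; hence $M_{11}=1$. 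Conditions~\ref{q12b} and~\ref{q12c} are literally conditions~\ref{q1b} and~\ref{q1c}, and \ref{q12a} follows by substituting $M_{1v}=P(v)$ into \ref{q1a}.

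For the converse, I would assume $M$ satisfies the four conditions of Lemma~\ref{q1mdef2} and check that conditions~\ref{q1a}--\ref{q1c} of Definition~\ref{q1mdef} hold. Here conditions~\ref{q1b} and~\ref{q1c} are again immediate. For condition~\ref{q1a}: condition~\ref{q12d} gives $M_{1v}=P(v)$, so $\sum_{u\in e}M_{uv}=P(v)=M_{1v}$ is exactly condition~\ref{q12a} restated, and $\sum_v M_{1v}=\sum_v P(v)=1=M_{11}$ using condition~\ref{q12d} for $M_{11}=1$ and normalisation of $P$ over a hyperedge — again one must be slightly careful, since $\sum_v P(v)$ is a sum over \emph{all} vertices, not one hyperedge, but this total equals $1$ because every vertex lies in some hyperedge and... actually the cleanest route is to note $\sum_v M_{1v} = \sum_v \sum_{u \in e}M_{uv}$ for a fixed $e$ (using \ref{q12a}) $= \sum_{u\in e}\sum_v M_{uv} = \sum_{u \in e} M_{1u} = \sum_{u\in e}P(u) = 1$, and separately $M_{11}=1$, so they agree.

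The main subtlety — and the step I would be most careful about — is the bookkeeping around $\sum_v M_{1v} = M_{11}$ and the value $M_{11}=1$, since the sum in that condition ranges over all vertices $v\in V(H)$ while the natural normalisation identity $\sum_{v\in e}P(v)=1$ is per-hyperedge; the resolution is always to route the all-vertices sum through a single hyperedge using condition~\ref{q1a} (equivalently \ref{q12a}) and then collapse via the $M_{1v}$ identities, which forces consistency. One should also double-check the quantifier in condition~\ref{q12a}: it asserts $\sum_{u\in e}M_{uv}=P(v)$ for \emph{all} $v\in V(H)$ (and all $e\in E(H)$), not merely $v\in e$, so in the forward direction one must verify this holds even when $v\notin e$; but this is immediate from condition~\ref{q1a} together with $M_{1v}=P(v)$, which holds for every $v$. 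No positive-semidefiniteness argument beyond what is needed to identify the diagonal with $P$ is required, since both definitions carry the p.s.d.\ hypothesis as a standing assumption; the equivalence is purely at the level of the linear constraints.
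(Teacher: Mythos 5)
There is a genuine error, and it sits exactly at the point you flagged as ``the main subtlety.'' You read the condition $\sum_v M_{1v}=M_{11}$ in Def.~\ref{q1mdef} as a sum over \emph{all} $v\in V(H)$. Under that reading the lemma is simply false: conditions (\ref{q1a})--(\ref{q1c}) already force $M_{1v}=P(v)$ for every $v$ (take any $e\ni v$ and use the vanishing of $M_{uv}$ for distinct $u,v\in e$ together with $M_{vv}=P(v)$), so an all-vertices sum would give $M_{11}=\sum_{v\in V(H)}P(v)$, which exceeds $1$ whenever the scenario has more than one hyperedge (e.g.\ it equals $4$ or more in the CHSH scenario $B_{2,2,2}$), contradicting the lemma's claim $M_{11}=1$. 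Consequently your attempt to ``force consistency'' in the converse direction contains an invalid step: in the chain $\sum_v M_{1v}=\sum_v\sum_{u\in e}M_{uv}=\sum_{u\in e}\sum_v M_{uv}=\sum_{u\in e}M_{1u}$ you use $\sum_{v\in V(H)}M_{uv}=M_{1u}$, but no hypothesis in either Def.~\ref{q1mdef} or Lemma~\ref{q1mdef2} controls row sums over all of $V(H)$; the constraints only ever fix sums over a single hyperedge. The same inconsistency appears in your forward direction, where you correctly establish $\sum_{v\in e}M_{1v}=1$ but then conclude ``hence $M_{11}=1$,'' which only follows on the per-edge reading you claim not to be using.

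The resolution is that the sum in Def.~\ref{q1mdef} is per hyperedge, $\sum_{v\in e}M_{1v}=M_{11}$ for all $e\in E(H)$ (this is the scope of the ``for all $e$'' quantifier and matches the AFLS hierarchy), and it is the only reading under which the stated equivalence holds. With it, the argument collapses to the paper's short proof: (\ref{q1b})$\Leftrightarrow$(\ref{q12b}), (\ref{q1c})$\Leftrightarrow$(\ref{q12c}); from (\ref{q1a})--(\ref{q1c}) one gets $M_{1v}=P(v)$ as above and then $M_{11}=\sum_{v\in e}M_{1v}=\sum_{v\in e}P(v)=1$ by normalisation, i.e.\ (\ref{q12d}); and given (\ref{q12d}), condition (\ref{q1a}) is equivalent to (\ref{q12a}). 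Two further small points: your derivation of $M_{1v}=P(v)$ never explicitly invokes the orthogonality condition (\ref{q1b}), without which that identity does not follow; and the claim that positive semidefiniteness gives $\sum_{v\in e}M_{vv}=1$ is a misattribution --- that is the normalisation of the behaviour $P$, and indeed no p.s.d.\ argument is needed anywhere in this lemma.
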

\begin{proof}
Condition (\ref{q1b}) of Def.~\ref{q1mdef} is equivalent to (\ref{q12b}) of Lemma~\ref{q1mdef2}, and condition (\ref{q1c}) of Def.~\ref{q1mdef} to (\ref{q12c}) of Lemma~\ref{q1mdef2}.
Conditions (\ref{q1a}), (\ref{q1b}) and (\ref{q1c}) of Def.~\ref{q1mdef} easily imply (\ref{q12d}) of Lemma~\ref{q1mdef2}.  Assuming (\ref{q12d}) of Def.~\ref{q1mdef}, (\ref{q1a}) of Def.~\ref{q1mdef} is equivalent to (\ref{q12a}) of Lemma~\ref{q1mdef2}.
\end{proof}

Finally, classical probabilistic models are defined as follows:
\begin{defn}\label{classical} \textbf{Classical models} [\cite{AFLS}, 4.1.1]\\
Let $H$ be a contextuality scenario. An assignment of probabilities $p: V(H)\to [0,1]$ is a \emph{classical model} if it can be written as 
\begin{equation}\label{eq:clas}
p(v) = \sum_\lambda q_\lambda p_\lambda(v),
\end{equation}
where the weights $q_\lambda$ satisfy $\sum_\lambda q_\lambda = 1$, and $p_\lambda$ are deterministic probabilistic models, that is normalised models such that $p_\lambda(v)=\{0,1\} \quad \forall \, v,\,\lambda$.

The set of all these models is denoted $\cC(H)$.
\end{defn}

Expressed in this language, the most famous result in this field, the Kochen-Specker theorem \cite{KS}, is that there exist scenarios that admit quantum models but no classical models, implying that there exist scenarios $H$ such that set $\cC(H) \subsetneq \cQ(H)$. This phenomenon is referred to as \textit{contextuality}, and the set of classical models is also referred to as the set of \textit{noncontextual models}.

Besides Bell scenarios, another special kind of scenario will be of particular relevance below when we come to discuss macroscopic versions of microscopic scenarios.  They are sometimes called ``marginal scenarios'' \cite{AB} or ``joint measurement scenarios''.  Here, we imagine some list of constituent experiments labelled $m \in \basic = \{1,...,k\}$, each with an outcome in the set $\out=\{1,...,d\}$.  Some subsets of the constituent experiments are ``jointly measurable''\footnote{In the sense that there exists a physically implementable protocol to measure them at the same time.}, and these subsets of $\basic$ are collected in the set $\joint \subset 2^\basic$ ($2^\basic$ is the set of all subsets of $\basic$).  For each $C \in \joint$, experimental probabilities are then assigned to each specification of a value for each of the constituent measurements: $\Pe_C(\{ a_m \}_{m\in C})$ where $a_m \in \out$.  Marginal scenarios can be represented in the hypergraph approach to contextuality, as explained in appendix \ref{a:joint_measurements}.  As is also explained in that appendix, for this type of scenario the definition of classical models given above can be rewritten with (\ref{eq:clas}) becoming

\begin{equation}\label{e:nc_joint}
\Pe_C(\{a_m \}_{m\in C}) = 
\sum_{m \in \basic \backslash C}  \cP_{\text{NC}}( \{ a_m \}_{m \in \basic} ) ,
\end{equation}
where $\Pe_C(\{a_m \}_{m\in C})$ is the experimental probability of obtaining outcomes $\{a_m \}_{m\in C}$ given that the joint measurement $C$ was performed, and where $\backslash$ is set difference. 
In this form, the following interpretation of non-contextuality for marginal scenarios is brought out: the probabilities are such that the results of the constituent experiments are consistent with an outcome for every observable having been predetermined before the measurement is performed, and the experiment ``merely revealing'' the results for the ones that are measured.  It should be noted that unlike the most general scenarios that can be represented in the hypergraph approach, all the (non-empty) joint measurement scenarios have a non-empty set of classical models.

\section{Macroscopic Non-Contextuality}\label{se:MNC}

In \cite{ML} Navascu\'es and Wunderlich identify an interesting property of quantum correlations which they termed \textit{macroscopic locality} (ML), and proposed that this be thought of as a simple physical principle to bound the set of correlations. Essentially, macroscopic locality requires that a certain ``macroscopic limit'' of a Bell-type experiment has a local explanation in the sense of Bell.  In \cite{ML} the principle was applied to bipartite Bell scenarios, and shown to be equivalent to the first level of the NPA hierarchy \cite{NPA0, NPA}. Hence, the set of correlations which satisfies the principle is strictly larger than the set of quantum correlations. In this section we extend this kind of reasoning to general contextuality scenarios in the hypergraph approach, including multipartite Bell scenarios as special cases, as described above.  We prove that the set of probabilistic models satisfying this principle is, again, strictly larger that the quantum set $\cQ$, but that it is stronger than Navascues and Wunderlich's ML when specialised to Bell scenarios. 

Consider a physical system $s$ and a set of measurements $E$, from which we choose one to perform on $s$. As reviewed above, in the hypergraph approach to contextuality such a scenario is represented by a hypergraph $H=(V,E)$; the (normalised) probability $p(v)$ of obtaining an outcome $v \in V$ given that a measurement $e \ni v$ is performed, for all outcomes, defines a probabilistic model on $H$. An experiment of this type is depicted in Fig.~\ref{micro-exp}, and we refer to it as \textit{microscopic experiment}. Now we want to define a macroscopic version of such an experiment, which we call its ``macroscopic extension''. Suppose now that the source produces $N$ independent copies of this system $s$, and that these $N$ systems reach the measurement device (see Fig.~\ref{macro-exp}). Now we assume that we are no longer able to distinguish individual outcomes, but only the fraction of instances (or ``intensity'') of each outcome $v$ given a measurement $e$.  The experimental results for a particular measurement in the macroscopic experiment are thus described by a probability distribution $\cP_e(\{I^v\}_{v \in e})$ where $I^v$ denotes the intensity for outcome $v$. 
This can be described as a joint measurement scenario in which the constituent experiments are the measurements of the intensity $I^v$ for each $v$.\footnote{Although here we must allow continuous values for the intensities, the generalisation does not change anything important for our purposes.}  The probabilities for the macroscopic extension are determined by the microscopic probabilistic model $p(v)$, in a way that we will make explicit below.  

\begin{figure}
\begin{center}
\begin{tikzpicture}
\draw[thick, color=gray!70!black, ->] (0,0) -- (2,0);
\draw[thick, color=gray!70!black, ->] (4,0) -- (6,0);
\draw[thick, color=gray!70!black, ->] (4,0) to [out=0,in=180] (6,1.25);
\draw[thick, color=gray!70!black, ->] (4,0) to [out=0,in=180] (6,-1.25);
\node[draw=gray!70!black,shape=circle,shading=ball, ball color=gray!50!white,scale=1] at (0,0) {} ;
\draw[draw, thick, color=gray!90!black,rounded corners, inner color=white,outer color=gray!50!white] (2,-1) rectangle (4,1) ;
\draw[draw, thick, color=gray!90!black, inner color=white,outer color=gray!50!white] (6,-0.25) arc (270:90:-0.4cm and 0.25cm) -- (6,-0.25);
\draw[draw, thick, color=gray!90!black, inner color=white,outer color=gray!50!white] (6,1) arc (270:90:-0.4cm and 0.25cm) -- (6,1);
\draw[draw, thick, color=gray!90!black, inner color=white,outer color=gray!50!white] (6,-1.5) arc (270:90:-0.4cm and 0.25cm) -- (6,-1.5);
\node at (6.7, 1.25) {D$_1$};
\node at (6.7, 0) {D$_2$};
\node at (6.7, -1.25) {D$_{|e|}$};
\node at (3, -1.5) {M};
\node at (0, -0.5) {S};
\node at (1, 0.25) {$s$};
\end{tikzpicture}
\end{center}
\caption{Microscopic experiment. A source S prepares a system $s$, which is sent to the measurement device M. There, an interaction between the measurement apparatus and the system sends the system towards one of a set of detectors, where its presence can be observed as a ``detector click''.  The clicking of detector D$_k$ corresponds to obtaining outcome $k$.}
\label{micro-exp}
\end{figure}
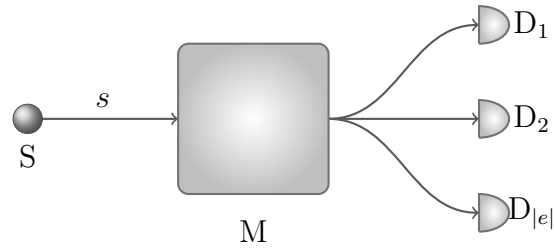

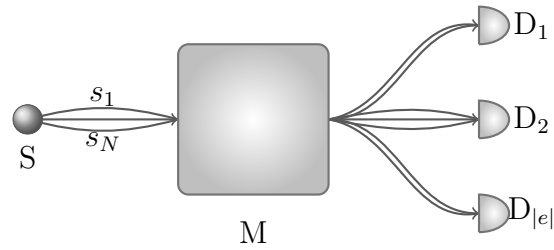
\begin{figure}
\begin{center}
\begin{tikzpicture}
\draw[thick, color=gray!70!black, ->] (0,0) -- (2,0);
\draw[thick, color=gray!70!black] (0,0) to [out=-15, in=195] (2,0);
\draw[thick, color=gray!70!black] (0,0) to [out=15, in=165] (2,0);
\draw[thick, color=gray!70!black, ->] (4,0) -- (6,0);
\draw[thick, color=gray!70!black] (4,0) to [out=-10, in=195] (6,0);
\draw[thick, color=gray!70!black] (4,0) to [out=10, in=165] (6,0);
\draw[thick, color=gray!70!black, ->] (4,0) to [out=0,in=180] (6,1.25);
\draw[thick, color=gray!70!black] (4,0) to [out=10,in=170] (6,1.25);
\draw[thick, color=gray!70!black, ->] (4,0) to [out=0,in=180] (6,-1.25);
\draw[thick, color=gray!70!black] (4,0) to [out=-10,in=190] (6,-1.25);
\node[draw=gray!70!black,shape=circle,shading=ball, ball color=gray!50!white,scale=1] at (0,0) {} ;
\draw[draw, thick, color=gray!90!black,rounded corners, inner color=white,outer color=gray!50!white] (2,-1) rectangle (4,1) ;
\draw[draw, thick, color=gray!90!black, inner color=white,outer color=gray!50!white] (6,-0.25) arc (270:90:-0.4cm and 0.25cm) -- (6,-0.25);
\draw[draw, thick, color=gray!90!black, inner color=white,outer color=gray!50!white] (6,1) arc (270:90:-0.4cm and 0.25cm) -- (6,1);
\draw[draw, thick, color=gray!90!black, inner color=white,outer color=gray!50!white] (6,-1.5) arc (270:90:-0.4cm and 0.25cm) -- (6,-1.5);
\node at (6.7, 1.25) {D$_1$};
\node at (6.7, 0) {D$_2$};
\node at (6.7, -1.25) {D$_{|e|}$};
\node at (3, -1.5) {M};
\node at (0, -0.5) {S};
\node at (1, 0.3) {$s_1$};
\node at (1, -0.3) {$s_N$};
\end{tikzpicture}
\end{center}
\caption{Macroscopic experiment. A source S prepares $N$ independent copies of a system $s$, which are sent to the measurement device M. There, for each system (and independently for each system), an interaction between the measurement apparatus and the system sends the system towards one of a set of detectors,  However, in this case, rather than a single click, there is a distribution of `clicks' over the detectors according to the probabilities for each outcome in the microscopic experiment. Hence, the `output' of this macroscopic experiment is the collection of intensities $I^v_e$ registered at the detectors.}
\label{macro-exp}
\end{figure}

Generalising ML \cite{ML}, our principle will be that in the limit of large $N$ there exists a non-contextual model for this experiment: the probabilities are such that the intensities for \textit{all} of the outputs $v$ could have been predetermined before the measurement is performed, and the experiment ``merely reveals'' the intensities that are measured. 

\begin{defn}\textbf{Macroscopic Non-Contextuality } \textsl{(MNC)}\\
The probabilistic model $p(v)$ obeys macroscopic non-contextuality if, in the limit $N \rightarrow \infty$, there exists a probability distribution $\cP_{\text{NC}}$ over a set of intensities $\{ I^v \}_{v \in V(H)}$, such that the experimental probabilities for the macroscopic extension of $p(v)$, $\cP_e(\{I^v\}_{v \in e})$, can be obtained as marginals from $\cP_{\text{NC}}$: 
\begin{equation}\label{e:nc_prob_dist}
\cP_e(\{I^v\}_{v \in e}) = 
\int \Bigl ( \prod_{v \in V(H) \backslash e} dI^v \Bigr ) \; \cP_{\text{NC}}( \{ I^v \}_{v \in V(H)} ) ,
\end{equation}
where $\backslash$ is set difference.
\end{defn}

Equation (\ref{e:nc_prob_dist}) is the analogue of Eq.~(\ref{e:nc_joint}) for the macroscopic experiment.  Note that no matter what the scenario $H$ is for the microscopic experiment (at least as long as it supports any probabilistic models at all), this condition can always be satisfied by some probability distributions.  The original $H$ may even constitute a proof of Kochen-Specker, but the corresponding macroscopic experiment is always represented by a marginal scenario, which is never of that type.

We will now investigate how to characterise the set of probabilistic models $p(v)$ that satisfy MNC.  It is useful to discuss this question in terms of random variables.  In general, the results of the macroscopic experiment are described by the probability distributions $\cP_e(\{I^v\}_{v \in e})$. 
With a slight abuse of notation, we denote by $I^v_e$ the random variable associated to variable $I^v$ in the distribution $\cP_e$.  This is done because the random variables derived from different distributions are distinct, and those corresponding to the same outcome would share the same symbol without the added subscript. Note that the random variables $I^u_e$  and $I^v_f$ for $e \neq f$ are defined from different distributions and so it is meaningless to ask about correlations between them.

A macroscopic experiment is defined from $N$ ``runs'' of the microscopic experiment.  Similarly to \cite{ML}, define $d^v_{i\,e}$ as a random variable that is $1$ if $v$ is obtained in the $i$th run of experiment $e$ and $0$ otherwise. The intensity of outcome $v$ given measurement $e$, $I^v_e$, is then proportional to $\sum_{i=1}^N d^v_{i\,e}$, and its deviation from the mean value is expressed as:
\begin{equation}
\label{e:def_I}
\bar{I}^v_e = \sum_{i=1}^N \frac{\bar{d}^v_{i\,e}}{\sqrt{N}} = \sum_{i=1}^N \frac{d^v_{i\,e}-p(v)}{\sqrt{N}},
\end{equation}
where the normalisation has been chosen to be $\sqrt{N}$ for reasons that will hopefully become clear below.

There are some constraints on the random variables $\bar{I}^v_e$ that follow from the consistency of the probabilistic model for the microscopic experiment. The first simply comes from the fact that the sum of the number of hits for all outcomes over all runs must be N, so that
\begin{equation}
\label{e:overall_norm}
\sum_{v \in e} \bar{I}^v_e=0 \quad \forall\,e.
\end{equation}

The second only holds in the limit.  The central limit theorem \cite{CLT} implies that, when $N \to \infty$, the probability distribution over the intensity fluctuations for each experiment converges to a multivariate Gaussian distribution.  In this case the covariance matrix $\gamma^e$ for the experiment $e$ will be given by the following, defined for all $u,v \in e$,
\begin{equation}
\label{e:central_lim}
\gamma^e_{uv} = \la \bar{I}^u_e \bar{I}^v_e \ra = \la \bar{d}^u_{1 \,e} \bar{d}^v_{1 \,e} \ra= \delta_{u v} p(v) - p(u)p(v).
\end{equation}
Note that the value of $\gamma^e_{uv}$ is the same for fixed $u$ and $v$, for any value of $e$.  This is because the marginal distribution $\Pe_e(I^u,I^v)$ for some measurement containing $u,v$ as outcomes is the same no matter what $e$ is (this in turn follows from the consistency of the probabilistic model for the microscopic experiment and the definition of the intensities).

So far these observations hold in general (\textit{i.e.}~without any constraint being placed on the microscopic model beyond consistency).  Now, if MNC holds, then in the limit $N \rightarrow \infty$ there exists a joint probability distribution over the set of intensitites for \textit{all} outcomes, such that the experimental distributions can be recovered as marginals as in (\ref{e:nc_prob_dist}).  In terms of random variables we can now define $I^v$, without any subscript denoting a measurement, from $\cP_{\text{NC}}(\{ I^v \}_{v \in V(H)} )$.  These $I^v$ are all derived from the same disriibution and so MNC implies that there must exist a bigger matrix $\gamma_{uv}$ defined for \textit{all} $u,v \in V(H)$ that has the properties of a covariance matrix for this probability distribution; in particular it is a positive semi-definite matrix.  Furthermore, from (\ref{e:nc_prob_dist}) this $\gamma_{uv}$ must reduce to (\ref{e:central_lim}) when $u,v$ are restricted to $e$.  A further constraint on $\gamma_{uv}$ implied by (\ref{e:nc_prob_dist}) and (\ref{e:overall_norm}) is that, even for $u$ not in the same measurement as $v$,
\begin{equation}
\sum_{u\in e} \gamma_{uv} = \la (\sum_{u\in e} \bar{I}^u) \bar{I}^v \ra = 0.
\end{equation}
Hence, the microscopic probabilistic models which are consistent with MNC may be characterised as follows:
\begin{defn}
\label{d:mnc}
A probabilistic model $p$ on scenario $H$ is macroscopically non-contextual if there exists a ``macroscopic non-contextuality certificate'': a p.s.d.~matrix $\gamma$ ranging over all $v \in V(H)$ such that
\begin{itemize}
\item $\sum_{u\in e} \gamma_{uv} = 0$;
\item $(u,v \in e \text{ and } u \neq v) \, \Rightarrow \gamma_{uv} = -p(u)p(v)$;
\item $\gamma_{vv} = p(v)-p(v)^2$;
\end{itemize}
\end{defn}

Our main result is that these microscopic probabilistic models are equivalent to the $\mathcal{Q}_1$ set in the hierarchy of pobabilistic models defined in the hypergraph approach \cite{AFLS}. 

\begin{thm}
A behaviour is macroscopically non-contextual iff it is in $\cQ_1$.
\end{thm}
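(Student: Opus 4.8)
The plan is to exhibit an explicit, invertible dictionary between macroscopic non-contextuality certificates $\gamma$ (Def.~\ref{d:mnc}) and $\cQ_1$ certificates written in the form of Lemma~\ref{q1mdef2}, and then to check that each defining condition is transported correctly. Since $\gamma$ behaves like a covariance matrix while the $\cQ_1$ matrix $M$ behaves like a matrix of second moments with a distinguished ``unit'' direction labelled $1$, the natural guess is
\[
\gamma_{uv} = M_{uv} - M_{1u}M_{1v}, \qquad\text{equivalently}\qquad M_{uv} = \gamma_{uv} + p(u)p(v),\ \ M_{1v} = p(v),\ \ M_{11} = 1 ,
\]
using $M_{1v}=p(v)$ (condition (\ref{q12d}) of Lemma~\ref{q1mdef2}) in both directions. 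Note that Def.~\ref{d:mnc} is already argued, in the discussion preceding the theorem, to characterise the MNC behaviours: necessity of such a $\gamma$ was derived there from the central limit theorem together with (\ref{e:overall_norm}) and (\ref{e:nc_prob_dist}), and conversely any p.s.d.\ $\gamma$ satisfying those constraints is realised by the mean-zero multivariate Gaussian with covariance $\gamma$, whose marginals onto each $e$ are exactly the limiting distributions $\cP_e$. So it suffices to prove Def.~\ref{d:mnc} $\Leftrightarrow$ Lemma~\ref{q1mdef2}.

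For the direction $\cQ_1 \Rightarrow$ MNC, start from a $\cQ_1$ certificate $M$ and set $\gamma_{uv} := M_{uv} - p(u)p(v)$. Then $\gamma_{vv} = M_{vv} - p(v)^2 = p(v)-p(v)^2$; for $u\neq v$ in a common $e$, $\gamma_{uv} = 0 - p(u)p(v)$; and $\sum_{u\in e}\gamma_{uv} = \sum_{u\in e}M_{uv} - \bigl(\sum_{u\in e}p(u)\bigr)p(v) = p(v) - p(v) = 0$, using the normalisation $\sum_{u\in e}p(u)=1$. The only nontrivial point is positivity: writing the p.s.d.\ matrix $M$ as a Gram matrix $M_{uv} = \langle w_u, w_v\rangle$ over $V(H)\cup\{1\}$, the vector $w_1$ is a unit vector since $M_{11}=1$, and $\gamma_{uv} = \langle w_u, w_v\rangle - \langle w_u,w_1\rangle\langle w_1,w_v\rangle = \langle \Pi w_u, \Pi w_v\rangle$ with $\Pi = \mathbbm{1} - |w_1\rangle\langle w_1|$ the orthogonal projection onto $w_1^{\perp}$; hence $\gamma$ is the Gram matrix of $\{\Pi w_v\}_{v\in V(H)}$ and is p.s.d.

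For the converse, given an MNC certificate $\gamma$ define $M$ on $V(H)\cup\{1\}$ by $M_{uv} := \gamma_{uv} + p(u)p(v)$ for $u,v\in V(H)$, $M_{1v}:=p(v)$, $M_{11}:=1$. Conditions (\ref{q12b})--(\ref{q12d}) of Lemma~\ref{q1mdef2} follow by the same one-line computations run backwards, and (\ref{q12a}) from $\sum_{u\in e}M_{uv} = \sum_{u\in e}\gamma_{uv} + \bigl(\sum_{u\in e}p(u)\bigr)p(v) = 0 + p(v)$. Positive semidefiniteness of $M$ follows from a Schur-complement argument: in block form $M = \begin{pmatrix}1 & q^{\mathsf T}\\ q & \gamma + qq^{\mathsf T}\end{pmatrix}$ with $q = (p(v))_{v\in V(H)}$, the Schur complement with respect to the $(1,1)$ entry is exactly $\gamma \succeq 0$, and the $(1,1)$ entry is $1>0$, so $M\succeq 0$. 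By Lemma~\ref{q1mdef2}, $M$ is a valid $\cQ_1$ certificate for $p$, so $p\in\cQ_1(H)$.

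The genuine content is thus entirely in the positivity bookkeeping — showing that ``p.s.d.\ matrix of second moments with a distinguished unit vector'' and ``p.s.d.\ covariance matrix'' are interchangeable — so I expect the Gram-matrix/Schur-complement step to be the only place requiring any care; the remaining verifications are direct substitution into the linear constraints, where the key structural input is the microscopic normalisation $\sum_{u\in e}p(u)=1$ that turns $\sum_{u\in e}\gamma_{uv}=0$ into $\sum_{u\in e}M_{uv}=p(v)$ and back.
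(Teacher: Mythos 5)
Your proposal is correct and follows essentially the same route as the paper: identify $\gamma_{uv}=M_{uv}-p(u)p(v)$, check the linear conditions by direct substitution using $\sum_{u\in e}p(u)=1$, and handle positivity via the Schur-complement fact (which the paper cites as Schur's theorem and you make explicit with the Gram-matrix/block-matrix computation). Your added remark that a p.s.d.\ $\gamma$ is realised by a mean-zero Gaussian, closing the loop between Def.~\ref{d:mnc} and the operational MNC statement, is a sensible elaboration of what the paper argues informally before the theorem, not a different method.
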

\begin{proof}
The proof is very similar to the one in \cite{ML} for ML. From Schur's theorem \cite{Horn}, because $M_{11}=1>0$, the positivity of $M$ is equivalent to the positivity of $\gamma_{uv}= M_{uv} - M_{1v}M_{1u}=M_{uv} - p(u)p(v)$.

With this definition, it is easy to check that (\ref{q1mdef2}.\ref{q12a}) is equivalent to $\sum_{u\in e} \gamma_{uv} =0$, (\ref{q1mdef2}.\ref{q12b}) is equivalent to $(u,v \in e \text{ and } u \neq v) \, \Rightarrow \gamma_{uv} = -p(u)p(v)$ and (\ref{q1mdef2}.\ref{q12c}) is equivalent to $\gamma_{uv} = p(v)-p(u)p(v)$.  Given the probabilistic model, the values of $M_{1v}$ and $M_{11}$ are determined.  Thus one can derive a macroscopic non-contextuality certificate $\gamma$ given that there exists a $\cQ_1$ certificate, and \textit{vice-versa}.
\end{proof}

Since quantum models are included within the $\mathcal{Q}_1$ set \cite{AFLS}, Quantum theory satisfies MNC for any contextuality scenario.

\section{Discussion}\label{se:discussion}

In the particular case of Bell Scenarios, the set $\mathcal{Q}_1$ is equivalent to the set of almost quantum correlations (see \cite{AFLS} theorem 6.4.1, and similarly, \cite{AQ} lemma 3). Hence, while the original version of ML allows a strictly larger set of correlations than the almost quantum set \cite{AQ}, when Bell scenarios are defined as in the hypergraph approach a stronger version of the principle arises, which allows exactly the almost quantum correlations.  This is the first time that a simple physical principle has been shown to limit correlations to the almost quantum set\footnote{In \cite{joe-hist} the set SPJQM$_b$ is shown to be equivalent to the almost quantum set.  However, in \cite{joe-hist} the motivation is slightly different, looking for a natural and useful generalisation of quantum mechanics, rather than a derivation of properties of QM from principles of the type discussed in the present work.  It is difficult to call SPJQM$_b$ a simple physical principle in itself, since in \cite{joe-hist} the positive-semidefiniteness is ``added by hand'' rather than derived.  Nonetheless, it is suggestive and intriguing for both programs that the same set of correlations has been arrived at from these different starting points.}.

The essential differences that lead to this strengthening are the consideration of global outcomes rather than local ones, and the inclusion of the \textit{correlated measurements} in the definition.  However, the gedanken experiment used above to motivate MNC was not of the form of a physical Bell experiment and so some care is needed here when it comes to the motivation for applying the condition. These issues are discussed in detail in appendix \ref{se:corr-meas}. The question of how to motivate correlated measurements in general nonlocality scenarios however goes beyond the scope of this manuscript and is deferred to future work \cite{fuwo}.

This situation is somewhat similar to the difference between the sets SPJQM and SPJQM$_b$ in the histories approach to quantum non-locality \cite{joe-hist}. There, including correlated (or ``branching'') measurements allowed the authors to recover almost quantum correlations from a condition that otherwise has only been shown to imply the first level of the NPA hierarchy \cite{NPA0, NPA}.  Also, the almost quantum set is much more naturally defined on the hypergraph approach version of Bell scenarios than the first NPA set,  while the latter is the more natural condition when correlated measurements are not considered, and instead it is directly imposed that all mathematical objects associated to local outcomes are independent of the distant measurement settings.  This suggests that, in order to characterise the almost quantum set, it is necessary to bring in considerations of correlated measurements (although it is of course possible that a different way to motivate the same strengthening may be found).

A \textit{wiring} is a classical operation by which a new probabilistic model $p$ is constructed from a set of models $\{p_1, \ldots, p_r\}$. For example, in a tripartite Bell scenario a
wiring may consist of Alice communicating her outcome $a$ to Bob, who uses this outcome as his choice of measurement and obtains an outcome $b$. One can then define a new probabilistic model $p$ from the original tripartite one upon identifying Alice and Bob with a new joint party with joint measurement choice $x$ and joint outcome $b$. This type of classical operation can increase the violation of a Bell inequality \cite{Brun}. However, there is very strong motivation to assume that the set of probabilistic models that arises within a physical theory is closed under these classical operations \cite{Brun}. This is indeed the case for the set of MNC probabilistic models. When considering a general contextuality scenario the possible classical operations include choosing one measurement from many via a probability distribution, or in considering many devices (\textit{i.e.}~systems) ``in parallel'' as one larger device.  In this regard, it is shown in \cite{AFLS} that the set $\mathcal{Q}_1$ is both convex and closed under tensor products. However, for the particular case of Bell scenarios there exists a larger set of classical operations one could consider. This is studied in \cite{AQ}, where it is proven that the set of Almost quantum correlations is closed under post-selection, grouping of parties and composition. Hence, whenever a collection of microscopic probabilistic models satisfy MNC, the result of any such wiring operation among them will satisfy the principle as well. 

Another question that is suggested by the above result is how far the principle can be pushed.  One could argue that the most general measurement in quantum mechanics is given by a POVM, and so Def.~\ref{qmdef} should be generalised, substituting positive operators for projectors.  Will the MNC principle still be true for all ``quantum correlations'' when we allow this?  And if the principle fails in these cases, should that not cast doubt on the claim that the principle is a physically reasonable restriction, undermining the motivations discussed above?

In fact the principle \textit{does} fail in this case, but our view is that this only highlights how problematic it is to generalise Def.~\ref{qmdef} to POVMs. Indeed, one can take any general probabilistic model $p(v)$ on the contextuality scenario $H$ (that is, any model that satisfies the normalisation constraints), and define the positive operators $P(v):= p(v) \, \mathbbm{1}$, where $ \mathbbm{1}$ is the identity on $\cH$. These operators will always satisfy the conditions of Def.~\ref{qmdef} generalised to POVMs. Since $p(v)$ can be any probabilistic model, if we allow general POVMs rather than projective measurements then \textit{no} principle that places a non-trivial restriction on correlations will be respected.  Thus, this kind of ``quantum model'' is clearly pathological.  Furthermore, there is nothing special to quantum theory about this: one could apply an analogous generalisation to classical models with similar results. In this case, the analogue of POVMs would be to incorporate classical randomness (``noise'') into the measurements.  But this would allow the unmysterious form of contextuality in which identified outcomes do not in fact to correspond to the same property of the system in any meaningful sense. As already noted by Spekkens \textit{et al.} \cite{Spe, LSW}, the relation of POVMs to contextuality demands more careful consideration (see also \cite{Kunjwal:2014} for further considerations along these lines).

Finally, a note on the meaning of the $N \rightarrow \infty$ limit being used here is in order.  We have assumed that (\textit{a}) we cannot look for correlations between individual runs on the experiment but only between the proportions of outcomes over all runs, and (\textit{b}) that at large $N$ the experimenter has the ability to resolve the fluctuations described by the CLT but not the deviations from this due to the finiteness of $N$.  In effect the experimenter must have a resolution that can pick out fluctuations of order $\sqrt{N}$.  This is stronger than the resolution necessary to measure the mean values of the intensities but weaker than would be necessary to resolve the microscopic structure in the stronger sense of seeing finite $N$ effects.  It is very intriguing that quantum mechanics turns out to be non-contextual in this natural limit. 

\section{Conclusions}
In this work we have proposed a strengthening of Macroscopic Locality, called Macroscopic Non-Contextuality, as a new principle to bound Quantum models on general contextuality scenarios. We have used the hypergraph approach to nonlocality and contextuality to represent such scenarios, and proven that our principle is equivalent to the first level  ($\mathcal{Q}_1$) in the hierarchy of probabilistic models defined in the hypergraph approach.

In the hypergraph approach representation of Bell scenarios, the $\mathcal{Q}_1$ set corresponds to Almost Quantum correlations, hence our approach provides a natural characterisation of this set. The inclusion of one-way LOCC measurements as feasible actions in a Bell scenario seems to be the key ingredient that allows the strengthening of the original ML principle. One-way LOCC measurements are important in information theoretic tasks, such as local distinguishability of quantum states \cite{oneway1}. Some questions remain open surrounding the physical motivation for considering this type of correlated measurement in nonlocality and contextuality scenarios, which will be deferred to future work \cite{fuwo}.

There are related programs which also treat the problem of characterising the set $\mathcal{Q}_1$. The exclusivity principle characterises $\mathcal{Q}_1$ when it is assumed in addition that quantum models are all included in the physically allowed set of models (the inclusion of this assumption defines \textit{Extended Consistent Exclusivity} (ECE) in the nomenclature of the hypergraph approach \cite{AFLS}). One of the main differences between this approach and ours is that our approach does not need the extra assumption. The other main difference involves the application of the principle to Bell scenarios. To derive the bound from the ECE principle for a Bell scenario, it is necessary to assume that some non-Bell scenarios can be realised and must also obey the same set of assumptions; the proof does not go through if considerations are restricted to nonlocality scenarios alone. In contrast, the derivation of the $\mathcal{Q}_1$ bound from MNC for a particular scenario involves no considerations of other scenarios at all.  Thus, MNC may be used as a principle to characterise correlations in Bell scenarios solely, and successfully recovers the almost quantum set.

In view of the fact that the Almost Quantum set satisfies (or at least has not been shown to violate) all of the principles proposed so far \cite{AQ}, the most important outstanding question is how to formulate a principle that gets closer to quantum models. In the present work, similarly to the original ML paper \cite{ML}, we focus on experiments where only one of many possible measurements is performed on the system. However, another physically relevant experiment could be defined by applying sequences of measurements. The formulation of an MNC-like principle for such experimental scenarios is an open problem, whose solution we believe may shed light on the important question of how to distinguish quantum from almost quantum correlations from basic natural principles.

A short discussion on the Almost quantum set is still in order. Even though there exist several mathematical characterisations of this set, the relation of the Almost Quantum behaviours to physical theories is still unclear and deserving of further research. For instance, no toy theory is available that produces this set of behaviours. It is also not yet known how to express in a physical form a simple property of quantum mechanics that rules out the supra-quantum almost quantum correlations. We believe that the understanding of these open problems will shed light on the characterisation of quantum theory. 

\section*{Acknowledgements}

We thank Miguel Navascu\'es, Tony Short and Sandu Popescu for fruitful discussions and comments. This research was supported by EPSRC grant DIQIP and ERC AdG NLST. 

\appendix 

\section{Marginal Scenarios and non-contextuality in the hypergraph approach.}\label{a:joint_measurements}

There is a special kind of scenario called ``marginal scenarios'' (or ``joint measurement scenarios'') which often appear in discussions of contextuality.  These are the central structure of an alternative formalism for contextuality called the ``observable-based'' approach \cite{AB}, which can be shown to be essentially equivalent to the hypergraph approach in the sense that one can be derived from the other (although additional constraints must be added to the observable-based formalism to recover the hypergraph-based formalism).  These joint measurement scenarios appear in the main argument of this paper as the representation of macroscopic experiments.

In appendix D of \cite{AFLS} the relationship between the joint measurement scenarios and the hypergraph approach to contextuality is explained. In this appendix, we briefly reprise the relevant points made there, and make explicit the relationship between classical non-contextual models in the hypergraph approach and the natural condition for non-contextuality that we apply to marginal scenarios in the main text.  Here we will use a slightly less abstract notation than that in appendix D of \cite{AFLS}, but otherwise the terminology will be the same.

Consider some list of constituent or ``basic'' experiments, which we will call ``observables'', labelled $m \in \basic = \{1,...,k\}$, each of which has outcomes in the set $\out=\{1,...,d\}$ (here we have assumed that all observables are valued in the same set).  Some subsets of the constituent experiments are ``jointly measurable'' and these subsets of $\basic$ are collected in the set of ``measurement contexts'' $\joint \subset 2^\basic$.  Here we will only be interested in ``maximal'' measurement contexts which cannot be extended by adding more observables, and so we assume that, for any $C,C' \in \joint$, if $C \subseteq C'$ then $C=C'$. We also assume that for every $m\in \basic$ there exists a $C \in \joint$ that contains it.  As is common practice in such cases we represent the marginal scenario defined by $(\basic,\joint,\out)$ just by $X$ when the context makes it clear what is meant.  

Some auxillary definitions are necessary before we can define a hypergraph approach scenario $H[X]$ that is equivalent to this.  In the hypergraph approach each event $v \in V(H)$ represents a full (``global'') specification of the experimental outcome.  Here, given a set of jointly measurable observables $C \in \joint$ a possible outcome is specified by $\{ a_m \}_{m \in C}$ where $a_m \in \out$.  These outcomes have to be distinguished for every $C \in \joint$,  and therefore we set
\begin{equation}
V(H[X]) :=  \bigl\{ (C, \{ a_m \}_{m \in C} ) : C \in \joint, \{ a_m \}_{m \in C} \in O^{C} \bigr\}.
\end{equation}

That is, we have a disjoint set of ``global'' outcomes for every (maximal) measurement context.  The definition of the measurement set $E(H[X])$ requires more care.  As well as simply measuring all observables in a measurement context, we could measure each observable in turn, and choose which observable to measure next depending on the results obtained so far, until we had an outcome for each observable in one of the maximal sets of jointly measurable observables.  For such a measurement, the full list of alternative global outcomes is not just a list of all possible combinations of local outcomes for one fixed measurement context.  We will define these ``measurement protocols'' recursively.

Measuring an observable $A$ limits our options on what we can measure next. Given the first measured observable $m$, the remaining possibilities can be represented as an ``induced'' marginal scenario $(\basic\{m\},\joint\{m\},\out )$, for which 

\begin{align}
\basic\{m\} &:=  \bigl\{ m' : m\neq m', \exists C \in \joint  \text{ s.t. } \{m,m'\} \subseteq C \bigr\}, \\
\joint\{m\} &:=  \bigl\{ C \backslash \{m\} : C \in \joint \text{ and } m \in C \bigr\}. 
\end{align}

A measurement protocol $T(X)$ on the marginal scenario $X$ can now be defined in the following way:  $T=\emptyset$ if $X=\emptyset$ and otherwise  $T=(m,f)$ where $m\in\basic$ is an observable and $f: \out \rightarrow T(X\{m\})$ is a function from outcomes to measurement protocols on $X\{m\}$.  In words, we first choose an observable $m$ to measure, and then decide between all protocols for choosing subsequent observables to measure based on the outcome. We continue making these choices until we can no longer find a jointly measurable observable to add to the set that we have already measured.  The set of all possible outcomes for a protocol $T=(m,f)$ can be defined in a similar way by including the outcome of the observable in the recursive structure:
\begin{equation}
 \mpout(T) := \bigl \{ (m,a_m,\alpha') : a_m \in O, \alpha' \in \mpout(f(a_m)) \bigr \}.
\end{equation}
Using these recursion relations, each outcome $\alpha\in \mpout(T)$ uniquely specifies an event $v \in V(H[X])$.  That is, $v_\alpha=(C_\alpha, \{ a_m \}_{m \in C_\alpha})$ where, if $\alpha=(m,a,\alpha')$ as in the above equation, then $C_\alpha= \{m\} \cup C_{\alpha'}$ on $X$, and $\{ a_m \}_{m \in C_\alpha}$ is the set of outcomes associated to $\alpha$ in the obvious way.  Finally, the measurement sets are
\begin{equation}
E(H[X]) := \{e_T : T \in \measprot(X) \}
\end{equation}
where $ \measprot(X) $ is the set of all measurement protocols on $X$ and 
\begin{equation}
e_T := \bigl \{ (C_\alpha,\{ a_m \}_{m \in C_\alpha}) : \alpha \in \mpout(T) \}.
\end{equation}

In \cite{AFLS} the relationship between the most general consistent probability structures in the two formalisms (``empirical models'' on marginal scenarios and ``probabilistic models'' for the hypergraph approach) is established, and it is noted that there is a similar relationship between the definitions of classical noncontextual models and quantum models in the two cases as well.  Here it is necessary to make the former connection explicit:  non-contextuality for marginal scenarios is used to define macroscopic non-contextuality in the main text, while otherwise the hypergraph approach has been employed, and so this begs the question of the connection between the two.

Let us consider the possible deterministic probabilistic models on a scenario $H[X]$.   If the only measurements included in our considerations were the ones corresponding to maximal measurement contexts $C \in \joint$, then, because these measurements correspond to disjoint sets of outcomes, for any choice of an outcome for every $C$ there would be a deterministic probabilistic model that assigned probability 1 to those outcomes only.  However, this would allow the implied outcome for a particular observable to depend on the overall context $C$ in which it was measured.  The definition of $E(H[X])$ given above, including all measurement protocols, prevents this, as we explain in the following. 

Consider a pair of outcomes for a pair measurement contexts which imply different outcomes for some particular observable $m^* \in \basic$.  Formally, this pair is some $u=(C,\{ a_m \}_{m \in C})$, $v=(C',\{ a'_m \}_{m \in C'})\in V(H[X])$ such that there exists an observable $m^* \in \basic$ with $m^* \in C$ and $m^* \in C'$ but with $a_{m^*} \neq a'_{m^*}$.  Any such pair is contained in some measurement $e \in  E(H[X])$ (as can be seen by making $m^*$ the first measurement in some measurement protocol $T=(m^*,f)$ and chosing the function $f$ appropriately so that $u,v \in  \mpout(T)$).  Considering this, in a deterministic probabilistic model on $H[X]$, no such pair can both be assigned probability 1, or in other words, the probability of obtaining a particular value for an observable cannot depend on the context $C$.  It is not difficult to see that there are no other constraints on deterministic probabilistic models on $H[X]$.  Any such model is thus completely specified by assigning an output to every observable, $\{a_m\}_{m \in \basic}$.

In the light of this, consider the definition of a classical model given in section \ref{se:contsce}. For marginal scenarios, for an event $u=(C, \{\hat{a}_m \}_{m \in C})$, the definition of classical models (\ref{eq:clas}) now takes the form

\begin{equation}\label{e:nc_joint_deltas}
\Pe_C(\{ \hat{a}_m \}_{m \in C}) := P \bigl (  (C, \{ \hat{a}_m \}_{m \in C} ) \bigr) =
\sum_{C \in \joint}  \cP_{NC}( \{ a_m \}_{m \in \joint} ) q_{\{ a_m \}_{m \in \joint}}(\{\hat{a}_m \}_{m\in C}),
\end{equation}

where $a_m \in A_m$, $\cP_{NC}( \{ a_m \}_{m \in \joint} )$ is a probability distribution over $\{ a_m \}_{m \in \joint}$,  and
\begin{equation}
q_{\{ a_m \}_{m \in \JM}}(\{\hat{a}_m \}_{m\in C}) = \prod_{m \in j} \delta_{\hat{a}_m \, a_m},
\end{equation}
that is, $q_{\{ a_m \}_{m \in \joint}}(\{\hat{a}_m \}_{m\in C})$ is 1 if $\hat{a}_m=a_m$ for all $m\in C$ and 0 otherwise.  To compare to eq.(\ref{eq:clas}), here $\{ a_m \}_{m \in \JM}$ corresponds to $\lambda$, $\cP_{NC}$ is the ``weight'' and $q$ is the deterministic model. Equation (\ref{e:nc_joint_deltas}) can then be easily rewritten as (\ref{e:nc_joint}), given in the main text.  This establishes the connection between the two ways of expressing classicality/non-contextuality, for scenarios in the hypergraph approach and marginal scenarios.

\section{Bell scenarios, their hypergraph-approach version and their macroscopic experiments}\label{se:corr-meas}

Mathematically, the MNC condition is a fairly straightforward generalisation of the Macroscopic Locality condition for nonlocality to general contextuality scenarios.  But physically the gedankenexperiment used to motivate it concerned a single system passing through a ``beam splitter,'' and this is different from the motivation given in \cite{ML}.  On examination this opens a number of issues of physical motivation, many of which will be relevant for other applications of physical principles to contextuality scenarios.

One of the main claims above was that, when specialised to Bell scenarios, the MNC principle constrains probabilistic models to the almost quantum set.  Below, the way in which Bell scenarios are represented in the hypergraph approach is briefly reviewed, and we present a way to interpret the macroscopic version of a Bell scenario in the hypergraph approach.

\subsection{Bell scenarios}

A typical Bell-type experiment consists of $n$ separated parties which have access each to a physical system.   The ``local'' measurements carried out by these parties are arranged so that they define spacelike separated events. In each run of the experiment, each party can subject their local system to their choice of one of $m$ local measurements, each with $d$ possible outcomes. The measurement choices are usually denoted by $x_k$ and the measurements outcomes by $a_k$, where $k$ labels the parties. Such a Bell scenario is thus characterised by the numbers $(n,m,d)$.  If the parties take note of the outcomes in each run of the experiment and gather statistics, they will eventually obtain a conditional probability distribution $P(a_1 \ldots a_n | x_1 \ldots x_n)$ (also referred to as \textit{correlations}).  Usually only probability distributions that obey the well-known ``no-signalling'' principle are of interest, meaning that marginalising over a local outcome $a_i$ will give a probability distribution that is independent of the local measurement setting $x_i$.  As well as the ``simultaneous'' measurements defined by a choice of measurement for each party, ``correlated measurements'' can be defined, which are important for the representation of Bell scenarios in the ALFS formalism.

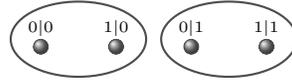
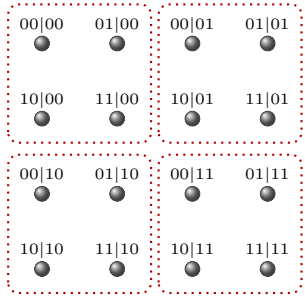
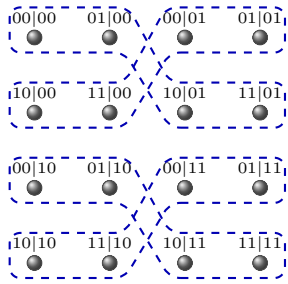
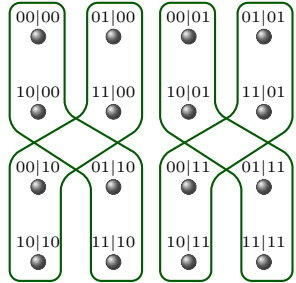
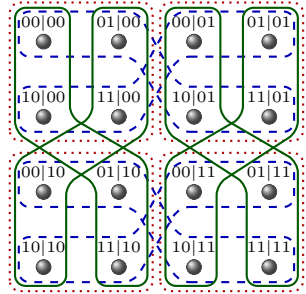
\begin{figure}
\definecolor{darkgreen}{rgb}{0,.5,0}
\begin{center}
\subfigure[Alice's two binary measurements $B_{1,2,2}$.]{
\label{CHSHa}
\begin{tikzpicture}
\clip (0,.5) rectangle (5,5);
\foreach \x in {0,1} \foreach \a in {0,1}
{
	\node[draw=gray!60!black,shape=circle,shading=ball, ball color=gray,scale=.5] (e) at (1,4-\a-2*\x) {} ;
	\node[above=0pt] at (e) {\tiny{$\a|\x$}} ;
}
\draw[thick,color=gray!60!black] (1,1.6) ellipse (.5cm and .9cm) ;
\draw[thick,color=gray!60!black] (1,3.6) ellipse (.5cm and .9cm) ;
\end{tikzpicture}}\hspace{.5cm}
\subfigure[Bob's two binary measurements $B_{1,2,2}$.]{
\label{CHSHb}
\begin{tikzpicture}
\clip (0,.5) rectangle (5,5);
\foreach \y in {0,1} \foreach \b in {0,1}
{
	\node[draw=gray!60!black,shape=circle,shading=ball, ball color=gray,scale=.5] (e) at (\b+2*\y+1,4) {} ;
	\node[above=0pt] at (e) {\tiny{$\b|\y$}} ;
}
\draw[thick,color=gray!60!black] (1.5,4.1) ellipse (.9cm and .5cm) ;
\draw[thick,color=gray!60!black] (3.5,4.1) ellipse (.9cm and .5cm) ;
\end{tikzpicture}}\hspace{.5cm}
\subfigure[Simultaneous measurements.]{
\label{sim_meas}
\hspace{.5cm}
\begin{tikzpicture}
\clip (0,.5) rectangle (5,5);
\foreach \x in {0,1} \foreach \y in {0,1} \foreach \a in {0,1} \foreach \b in {0,1}
{
	\node[draw=gray!60!black,shape=circle,shading=ball, ball color=gray,scale=.5] (e) at (\b+2*\y+1,4-\a-2*\x) {} ;
	\node[above=0pt] at (e) {\tiny{$\a\b|\x\y$}} ;
}
\foreach \x in {0,2} \foreach \y in {0,2} \draw[rounded corners,thick,color=red!70!black, dotted] (\x+0.55,\y+0.68) rectangle (\x+2.45,\y+2.52) ;
\end{tikzpicture}\hspace{.5cm}}\hspace{.5cm}
\subfigure[Bob's measurement choice depends on Alice's outcome.]{\hspace{.5cm}
\label{CHSHab}
\begin{tikzpicture}
\clip (0,.5) rectangle (5,5);
\foreach \x in {0,1} \foreach \y in {0,1} \foreach \a in {0,1} \foreach \b in {0,1}
{
	\node[draw=gray!60!black,shape=circle,shading=ball, ball color=gray,scale=.5] (e) at (\b+2*\y+1,4-\a-2*\x) {} ;
	\node[above=0pt] at (e) {\tiny{$\a\b|\x\y$}} ;
}
\foreach \y in {0,2} \draw[thick,color=blue!70!black,dashed,rounded corners] (2.8,\y+0.8) -- (4.33,\y+0.8) -- (4.33,\y+1.4) -- (2.8,\y+1.4) -- (2.25,\y+2.4) -- (0.67,\y+2.4) -- (0.67,\y+1.8) -- (2.2,\y+1.8) -- cycle ;
\foreach \y in {0,2} \draw[thick,color=blue!70!black,dashed,rounded corners] (2.8,\y+1.8) -- (4.33,\y+1.8) -- (4.33,\y+2.4) -- (2.75,\y+2.4) -- (2.2,\y+1.4) -- (0.67,\y+1.4) -- (0.67,\y+0.8) -- (2.2,\y+0.8) -- cycle ;
\end{tikzpicture}\hspace{.5cm}}\hspace{.5cm}
\subfigure[Alice's measurement choice depends on Bob's outcome.]{\hspace{.5cm}
\label{CHSHba}
\begin{tikzpicture}
\clip (0,.5) rectangle (5,5);
\foreach \x in {0,1} \foreach \y in {0,1} \foreach \a in {0,1} \foreach \b in {0,1}
{
	\node[draw=gray!60!black,shape=circle,shading=ball, ball color=gray,scale=.5] (e) at (\b+2*\y+1,4-\a-2*\x) {} ;
	\node[above=0pt] at (e) {\tiny{$\a\b|\x\y$}} ;
}
\foreach \x in {0,2} \draw[thick,color=darkgreen!70!black, rounded corners] (\x+0.62,2.4) -- (\x+0.62,0.75) -- (\x+1.3,0.75) -- (\x+1.3,2.1) -- (\x+2.38,2.8) -- (\x+2.38,4.45) -- (\x+1.65,4.45) -- (\x+1.65,3) -- cycle ;
\foreach \x in {0,2} \draw[thick,color=darkgreen!70!black, rounded corners] (5-\x-0.62,2.4) -- (5-\x-0.62,0.75) -- (5-\x-1.3,0.75) -- (5-\x-1.3,2.1) -- (5-\x-2.38,2.8) -- (5-\x-2.38,4.45) -- (5-\x-1.65,4.45) -- (5-\x-1.65,3) -- cycle ;
\end{tikzpicture}\hspace{.5cm}}
\subfigure[The full CHSH scenario $B_{2,2,2}$.]{
\label{CHSHc}
\begin{tikzpicture}
\clip (0,.5) rectangle (5,5);
\foreach \x in {0,1} \foreach \y in {0,1} \foreach \a in {0,1} \foreach \b in {0,1}
{
	\node[draw=gray!60!black,shape=circle,shading=ball, ball color=gray,scale=.5] (e) at (\b+2*\y+1,4-\a-2*\x) {} ;
	\node[above=0pt] at (e) {\tiny{$\a\b|\x\y$}} ;
}
\foreach \x in {0,2} \foreach \y in {0,2} \draw[rounded corners, thick, color=red!70!black, dotted] (\x+0.55,\y+0.68) rectangle (\x+2.45,\y+2.52) ;
\foreach \y in {0,2} \draw[thick,color=blue!70!black, dashed,rounded corners] (2.8,\y+0.8) -- (4.33,\y+0.8) -- (4.33,\y+1.4) -- (2.8,\y+1.4) -- (2.25,\y+2.4) -- (0.67,\y+2.4) -- (0.67,\y+1.8) -- (2.2,\y+1.8) -- cycle ;
\foreach \y in {0,2} \draw[thick,color=blue!70!black, dashed,rounded corners] (2.8,\y+1.8) -- (4.33,\y+1.8) -- (4.33,\y+2.4) -- (2.75,\y+2.4) -- (2.2,\y+1.4) -- (0.67,\y+1.4) -- (0.67,\y+0.8) -- (2.2,\y+0.8) -- cycle ;
\foreach \x in {0,2} \draw[thick,color=darkgreen!70!black, rounded corners] (\x+0.62,2.4) -- (\x+0.62,0.75) -- (\x+1.3,0.75) -- (\x+1.3,2.1) -- (\x+2.38,2.8) -- (\x+2.38,4.45) -- (\x+1.65,4.45) -- (\x+1.65,3) -- cycle ;
\foreach \x in {0,2} \draw[thick,color=darkgreen!70!black, rounded corners] (5-\x-0.62,2.4) -- (5-\x-0.62,0.75) -- (5-\x-1.3,0.75) -- (5-\x-1.3,2.1) -- (5-\x-2.38,2.8) -- (5-\x-2.38,4.45) -- (5-\x-1.65,4.45) -- (5-\x-1.65,3) -- cycle ;
\end{tikzpicture}}
\end{center}
\caption{Construction of the CHSH scenario $B_{2,2,2}$. This figure is originally Fig. 7 in \cite{AFLS}.}
\label{CHSH_sce}
\end{figure}

A correlated measurement in a Bell scenario is defined as follows. One party performs a local measurement $x_{i_1}$ and obtains an outcome $a_{i_1}$ which is communicated to the remaining parties. The second party in the protocol chooses a measurement $x_{i_2}$, which may depend on $a_{i_1}$. An outcome $a_{i_2}$ is obtained and communicated to the remaining parties. The protocol proceeds similarly for all parties, so that each party's measurement may depend on the previous parties' outcomes. The order in which the parties measure may also be defined dynamically throughout the protocol (see \cite{AFLS}, Def. 3.3.4).  This kind of protocol is often referred to as an example of a ``wiring protocol'' between parties.  These measurements are included in the hypergraph approach definition of Bell scenarios, which are denoted $B_{n,m,d}$.  See figure \ref{CHSH_sce} for the example of $B_{2,2,2}$, also known as the CHSH scenario\footnote{\label{f:bell-marginal}In the hypergraph approach, Bell scenarios are a special case of the marginal scenarios discussed in the previous appendix.  There are $n$ ``local'' sets of observables each containing $m$ observables, with $d$ outcomes each, such that the maximal jointly measurable sets are all sets composed of one observable from each of the local sets \cite{AFLS}.  In this case, the set of all measurement protocols defined in the previous appendix is the same as the set of all simultaneous and correlated measurements discussed here \cite{AFLS}.}.

As commented in \cite{AFLS}, the motivation for including the correlated measurements is mainly mathematical.  The simple hypergraph based framework allows the application of powerful graph-theoretic methods, and so it is useful to be able to treat Bell scenarios as a special case.  Most commonly in discussions of Bell scenarios, it is directly imposed that a \textit{local} measurement outcome at the $A$ wing given a setting in the $B$ wing should be indentifed with the same local outcome at $A$ given a different setting at $B$ (and similarly with $A$ and $B$ reversed).  This, without further restrictions, implies no-signalling, and is essentially what is done in \cite{ML} for instance.  But this is not directly representable in the hypergraph approach, which deals directly with global outcomes and only allows these to be indentified with each other.  To impose the no-signalling principle directly on these scenarios would require either an ad-hoc restriction or a more complicated general formalism (\textit{e.g.}~allowing the identification of sets of global outcomes as well as individual outcomes).  Adding correlated measurements circumvents this problem because, when they are present, only probabilistic models that satisfy the no-signalling principle are consistent. Similarly, their inclusion ensures that the definition of quantum models for general contextuality scenarios specialises to the usual definition of quantum correlations for Bell scenarios, which is as follows.

\begin{defn}\label{qcorrvan}\textbf{Quantum Correlations}\\
Let $(n,m,d)$ be a Bell Scenario. A conditional probability distribution $P(a_1 \ldots a_n | x_1 \ldots x_n)$ is quantum if there exists a Hilbert space $\mathcal{H}$, a state $\rho\in\mathcal{B}_{+,1}(\mathcal{H})$ and $m$ projective measurements $\{P^k_{a_k|x_k}\}_{a_k=1 \ldots d}$ for each party $k$ such that:
\begin{enumerate}
\item $\sum_{a_k=1}^d P^k_{a_k|x_k} = \mathbbm{1}_{\mathcal{H}}$ for all $x_k = 1 \ldots m$,
\item $[P^k_{a_k|x_k},P^{k^\prime}_{a_{k^\prime}|x_{k^\prime}}]=0$ for all $k \neq {k^\prime}$,
\item $P(a_1 \ldots a_n | x_1 \ldots x_n) = \mathrm{tr} (P^1_{a_1|x_1} \, \ldots \, P^n_{a_n|x_n} \, \rho)$.
\end{enumerate}
\end{defn}

Similarly non-contextuality, when applied to Bell scenarios, is equivalent to locality.

Thus, correlated measurements play a useful role mathematically.  Physically, however, if the scenario is meant to represent a choice of measurement for $n$ spacelike-separated parties, the protocol for correlated measurements cannot actually be carried out, and so we seem to have a contradiction between the most important application of Bell scenarios and the inclusion of correlated measurements. It is interesting to consider how this affects the motivation of principles applied to Bell scenarios in general.  Below, we only consider motivations for applying the MNC principle to the results of Bell experiments.

\subsection{Bell scenarios and the MNC principle}

As mentioned in the main text, applying the MNC condition to bipartite Bell scenarios results in a condition that is stronger than the original Macroscopic Locality condition \cite{ML}, which applies exclusively to this type of scenario.  However, the motivations for imposing the two conditions in this case differ substantially.

In section \ref{se:MNC} we gave a physical picture to motivate the MNC condition, of a single system passing through a measurement device after which it ends up hitting one of many detectors; the macroscopic version of the experiment simply consisted of many systems passing through a similar apparatus.  For physical experiments of this form, the motivation for applying the MNC condition has the most clarity. Some such gedankenexperiments do indeed correspond to the $B_{n,m,d}$ scenarios discussed above, and in this sense these motivating comments are valid for these scenarios.  However, this easy answer has little to do with the special status of Bell scenarios; because it takes place at one location in spacetime, such an experiment cannot (directly, at least) invoke any motivations stemming from locality or relativity.  It is the standard Bell experiment, involving spacelike separated parties, that is the physical experiment of interest which, to a large extent, motivates the study of these scenarios in the first place.

In the original argument for ML, the gedankenexperiment discussed corresponds to this standard case: one run of the experiment involves a pair of particles, each passing through a measurement device at distant locations.  This is important, because the separation of the two measurement devices is what motivates the application of the no-signalling and locality conditions used in the argument.  The main strength of this kind of motivation, in contrast to that for the MNC condition, is that it does not depend on any assumptions about the experimental protocols under consideration apart from the separation of the parties (``device independence''). 

It is difficult to directly extend this sort of motivation to the MNC condition applied to Bell scenarios.  To begin with, MNC concerns ``intensities'' corresponding to the \textit{global} measurement outcomes, but in the ML gedankenexperiment only the counts of local outcomes are available, not counts of how many \textit{pairs} of particles hit a particular \textit{pair} of detectors.  Secondly, it is not immediately clear how to motivate the inclusion of correlated measurements. Thus it is not clear if MNC implies any constraints on experimental results for the spacelike-separated version of the Bell experiment\footnote{One might wonder if the global intensities and the inclusion of correlated measurements could be discarded without affected the strength of the MNC condition.  But these actually constitute the difference between the ML and MNC conditions, and these must therefore account for the increased strength of the MNC condition over the ML condition.   For example, from the constraint in equation (\ref{e:central_lim}) in the main text, when we have two outcomes that are alternatives in one measurement, the corresponding entry in the covariance matrix is determined.  Thus, the more measurements are given, the more constraints there are on this matrix, and so considering the correlated measurements changes the constraints.}.

Given a single system experiment, one the other hand, these problems are avoided.  To apply MNC to the ``true'' Bell experiment, therefore, we need to relate this experiment to a single system experiment.  To examine this issue, let us consider the relatively familiar case of quantum theory.

Let us first consider the usual Bell experiment, which will be called ``experiment 1'', with two separated parties.  The experimental results are a conditional probability distribution, and the mathematical model of experiment 1 is as in Def. \ref{qcorrvan} given in the previous section, \textit{i.e.}~a Hilbert space, state, and projectors with certain properties.  In quantum mechanics, it is always (in principle) possible to set up a single system experiment that can be described by the same quantum model, and thus has corresponding experimental results.  Call this ``experiment 2''.  Furthermore, in experiment 2 it is (in principle) possible to add measurements corresponding to all the correlated measurements for the appropriate Bell scenario, making experiment 2 a realisation of a quantum model on a Bell scenario in the hypergraph approach.  Thus, in quantum theory at least, there is a strong relation between any given ``true'' Bell experiment and some single system experiment which can be described by a Bell scenario in the hypergraph approach.

To apply MNC to the separated Bell experiment in general, we need this property to remain true in whatever theoretical framework is being applied.  That is, for any possible experimental results for a given Bell experiment, there should exist (in principle) a single system experiment that can be described by the corresponding Bell scenario in the hypergraph approach, and which gives the corresponding experimental results.  If this was true, then any constraint implied by a principle for the single-system experiment would also apply to the separated Bell scenario.  In this case MNC would indeed apply to the separated Bell experiment and the almost quantum bound would be respected.

This is a rather abstract assumption on the relation of theory to experiment.  It does provide at least one way to apply MNC to experiments with separated parties, even if the motivation is not as simple as merely invoking relativistic causality.  Furthermore it might be agued that, whenever nonlocality is considered as a subcategory of contextuality for all intents and purposes, some assumptions of this nature are always implicitly made.

\small
\bibliographystyle{unsrt}
\bibliography{mnc}

\end{document}